\documentclass[sigplan,screen]{acmart}

\usepackage{mathtools}

\usepackage{booktabs}
\usepackage{microtype}   
\usepackage{graphicx}
\usepackage{xspace}
\usepackage{xcolor}
\usepackage{tikz}
\usetikzlibrary{arrows.meta,positioning,shapes.geometric,calc}
\usepackage{pgfplots}
\pgfplotsset{compat=1.18}
\usepgfplotslibrary{groupplots}
\pgfplotsset{
  paperaxis/.style={
    axis lines*=left,
    axis line style={draw=grid},
    tick style={draw=grid},
    tick label style={font=\scriptsize, text=ink},
    label style={font=\footnotesize, text=ink},
    ymajorgrids,
    grid style={grid!55},
  }
}

\definecolor{blue}{HTML}{2A78D6}
\definecolor{red}{HTML}{E34948}
\definecolor{green}{HTML}{1BAF7A}
\definecolor{orange}{HTML}{E58B3A}
\definecolor{purple}{HTML}{4A3AA7}
\definecolor{yellow}{HTML}{E5B84B}
\definecolor{ink}{HTML}{52514E}
\definecolor{grid}{HTML}{D8D7D2}

\newcommand{\aufbau}{\textsc{Aufbau}\xspace}
\newcommand{\pseven}{P7\xspace}
\newcommand{\ptool}{\pseven}
\newcommand{\tryfeed}{\texttt{try\_feed}\xspace}
\newcommand{\Lost}{\texttt{Lost}\xspace}
\newcommand{\Live}{\texttt{Live}\xspace}
\newcommand{\Satisfied}{\texttt{Satisfied}\xspace}
\newcommand{\evalf}{\mathit{eval}}
\newcommand{\hole}[1]{{\ensuremath{\mathfrak{#1}}}}

\setcopyright{cc}
\setcctype{by}
\acmDOI{10.1145/3843750.3843841}
\acmYear{2026}
\copyrightyear{2026}
\acmISBN{979-8-4007-2986-7/2026/10}
\acmConference[LMPL '26]{Proceedings of the 2nd ACM SIGPLAN International Workshop on Language Models and Programming Languages}{October 4--9, 2026}{Oakland, CA, USA}
\acmBooktitle{Proceedings of the 2nd ACM SIGPLAN International Workshop on Language Models and Programming Languages (LMPL '26), October 4--9, 2026, Oakland, CA, USA}
\acmSubmissionID{splashws26lmplmain-p30-p}
\received{2026-07-10}
\received[accepted]{2026-08-14}

\title{Semantic Prefix Oracles for LLM Decoding: Contracts and Differential Validation}

\author{Paul Kronlund-Drouault}
\correspondingauthor
\orcid{0009-0001-9326-4878}

\affiliation{%
  \institution{ENS de Lyon}
  \city{Lyon}
  \country{France}
}

\affiliation{%
  \institution{Unsuspicious Industries}
  \city{Paris}
  \country{France}
}

\affiliation{%
  \institution{Université de Lille}
  \city{Lille}
  \country{France}
}

\email{pkd@unsuspicious.org}

\begin{document}

\begin{abstract}
Constrained decoding can enforce regular or context-free output formats, but many program-generation failures are semantic: scope, typing, and declaration effects depend on context. We present semantic grammar specifications, a declarative formalism that attaches such constraints to a context-free surface and executes them during Earley descent. Our implementation enforces \emph{safe pruning}: it rejects only prefixes whose semantic contradictions cannot be repaired by any continuation. A separate, grammar-dependent, \emph{dead-end freedom} property guarantees the existence of a realizable witness for each remaining branch. We give simple sufficient conditions based on surface productivity, type coverage, and left-to-right constraint flow. Our finite-lambda, core ML, and C-like fragments satisfy them, while the STLC instance used in our experiments does not: plain STLC can violate type coverage, and we show how restricting its type universe recovers it. A tokenizer-lifting lemma carries character-level witnesses to token sequences under an explicit vocabulary-coverage hypothesis.

We validate the implementation differentially against production compilers (\texttt{ocamlc}, \texttt{cc}). Across every prefix of 65 compiler-valid programs we observe zero false prunes. The semantic oracle localizes 25/30 invalid programs mid-stream, against 0/30 for a syntax-only oracle, and agrees on 42/42 recursion probes. A twelve-model generation study, including a matched semantic-versus-syntactic ablation for nine models, finds nonnegative observed semantic-minus-syntactic point estimates for every model-language pair, with maxima of $+15.2$ points on STLC task correctness and $+14.3$ points on ML validity.
\end{abstract}

\keywords{constrained decoding, semantic prefix grammars, attribute grammars, differential testing, large language models}

\begin{CCSXML}
<ccs2012>
   <concept>
       <concept_id>10003752.10003766.10003771</concept_id>
       <concept_desc>Theory of computation~Grammars and context-free languages</concept_desc>
       <concept_significance>500</concept_significance>
       </concept>
   <concept>
       <concept_id>10011007.10011006.10011039</concept_id>
       <concept_desc>Software and its engineering~Formal language definitions</concept_desc>
       <concept_significance>500</concept_significance>
       </concept>
   <concept>
       <concept_id>10011007.10011074.10011099.10011102.10011103</concept_id>
       <concept_desc>Software and its engineering~Software testing and debugging</concept_desc>
       <concept_significance>300</concept_significance>
       </concept>
   <concept>
       <concept_id>10010147.10010178.10010179</concept_id>
       <concept_desc>Computing methodologies~Natural language processing</concept_desc>
       <concept_significance>300</concept_significance>
       </concept>
 </ccs2012>
\end{CCSXML}

\ccsdesc[500]{Theory of computation~Grammars and context-free languages}
\ccsdesc[500]{Software and its engineering~Formal language definitions}
\ccsdesc[300]{Software and its engineering~Software testing and debugging}
\ccsdesc[300]{Computing methodologies~Natural language processing}

\maketitle

\section{Introduction}\label{sec:intro}

Grammar-constrained decoding has by now entered inference infrastructure, with libraries such as Outlines~\cite{willard2023efficient}, DOMINO~\cite{beurerkellner2024guiding}, and Formatron~\cite{sun2025formatron} which mask a model's vocabulary so that every emitted prefix extends to output in a regular or context-free language, giving a structural guarantee of \emph{completability}. The constraints they express are nonetheless weak relative to how models actually fail.

Syntax-only constraints do not address consequential program-generation failures such as scope, type compatibility, and declaration effects, none of which a grammar mask can express. Lifting the guarantee to semantic constraints gives two distinct properties:
\begin{enumerate}
    \item \emph{safe pruning}: no prefix with a semantically valid completion is rejected
    \item \emph{dead-end freedom}: every retained prefix still has such a completion
\end{enumerate}

The first protects the model's valid continuations while the second protects the decoding loop from being walked into a state where no token is admissible. They are different theorems with different hypotheses. 

We structure this paper around the following question:
\begin{quote}
	\emph{When may a semantic prefix oracle reject an LLM continuation without hiding a valid program, and how do we validate that the implementation actually respects that contract?}
\end{quote}

As a working example, take the finite lambda instance of \S\ref{sec:instances}. Under its constraints, a model that has decoded \texttt{let x : Int = } may still emit a digit, an open parenthesis, or any \texttt{Int} in scope, but the token \texttt{true} is masked out: the annotation has fixed the demanded type and the boolean literal's type is fixed too. This is where semantics is needed, since \texttt{let x : Int = true;} is a perfectly good parse in the underlying context-free grammar. 

Our declarative format attaches typing rules to a context-free surface. The rules compile to a language-parametric IR over first-order terms, pattern ascription, syntactic unification, and finite right-bound context effects. \aufbau executes this IR during Earley descent and rejects a path only after a stable semantic contradiction. The kernel has no built-in notion of type, function, or arrow. Those meanings come from each grammar's constructors and shared holes. Safe pruning is therefore a property of the oracle. Dead-end freedom is a stronger property of particular grammars. In \S\ref{sec:soundness} we give three sufficient conditions, show that the ML and C-like instances satisfy them, and show why the STLC instance used in our experiments does not.

Because a proof of the rejection rule does not rule out implementation bugs, we also validate Aufbau \emph{differentially} against production compilers sharing none of our code. \pseven then measures what the semantic layer buys over the same mask with typing removed.

This paper contributes:
\begin{itemize}
    \item A semantic-prefix formalism that separates safe pruning from dead-end freedom, with a safe-pruning theorem, simple sufficient conditions for dead-end freedom, and a tokenizer-lifting lemma under an explicit vocabulary-coverage hypothesis (\S\ref{sec:model}, \S\ref{sec:tool})
	\item A language-parametric implementation (\aufbau/\pseven, with STLC, finite-lambda, core ML, C-like, and typed tool-DSL instances in three surface dialects) and prefix-by-prefix differential validation against \texttt{ocamlc} and \texttt{cc}: zero false prunes over every prefix of 65 in-fragment valid programs, 42/42 on a recursion probe, mid-stream error localization at 25/30 vs.\ 0/30 for a syntax-only oracle (\S\ref{sec:tool}--\S\ref{sec:cert})
	\item A twelve-model generation study containing a matched semantic-versus-syntactic ablation for the nine models that ran both arms, 3{,}693 graded attempts in total, holding parser, sampler, tokenizer bridge, and surface grammar fixed, with per-language validity and correctness metrics reported separately, together with exploratory observations on reasoning-then-constrained scheduling, rejection--uncertainty association, and multi-turn tool episodes (\S\ref{sec:gen}).
\end{itemize}

The artifact contains the implementation, specifications, data, and commands used for the reported measurements.

\section{The Constraint Model}\label{sec:model}

\subsection{Grammar Specifications}\label{sec:spg}
We fix an alphabet $\Sigma$ and write $L \subseteq \Sigma^\ast$ for a language. A prefix $s$ is \emph{completable} for $L$ when $s s' \in L$ for some $s'$. Prefix parsing accepts exactly the completable prefixes of the surface syntax. The surface parser keeps branches compatible with the consumed input, and the semantic layer removes branches carrying a stable contradiction. Writing $R_G$ for the prefixes retained by Aufbau, the guarantee proved below (\S\ref{sec:soundness}) is the one-sided inclusion
\[
	\operatorname{Pref}(L_G) \subseteq R_G,
	\qquad
	\operatorname{Pref}(L_G) = \{s \mid \exists r.\; sr \in L_G\}.
\]

\begin{definition}[Semantic grammar specification]\label{def:spg}\setlength{\leftmargin}{0pt}\setlength{\rightmargin}{0pt}
 A \emph{semantic grammar specification} (what an .auf file can represent) is a tuple $G = (N, T, P, S, \Theta, \mathcal{T}, \mathcal{B})$: nonterminals $N$, terminal recognizers $T$, productions $P$, start symbol $S$, semantic rules $\Theta$, a partial map $\mathcal{T} : N \rightharpoonup \Theta$ attaching a rule to a nonterminal, and binding identifiers $\mathcal{B}$. A production for $n$ is a sequence $\alpha_1[b_1]\cdots\alpha_m[b_m]$ where each $\alpha_i \in T \cup N$ and each $b_i \in \mathcal{B} \cup \{\varepsilon\}$ names the child's evidence for use by a rule.
\end{definition}

We reserve \emph{semantic prefix grammar} for the subclass additionally satisfying the three conditions of \S\ref{sec:soundness} that guarantee witness construction. Bindings are annotations only, and do not affect the context-free structure. Terminals are matched through a segment interface returning a status in $\{\mathsf{Prefix},$ $\mathsf{Exact},$ $\mathsf{Extensible}\}$ together with a regular residual, the derivative of the recognizer after the consumed segment~\cite{brzozowski1964derivatives,owens2009regex}. The minimal example, used throughout:
{\small
\[
	\begin{aligned}
		\text{Id}   & \to \mathcal{R}(\texttt{[A-Za-z][A-Za-z0-9]}^\ast)                                                                    \\
		\text{Type} & \to \text{Id} \mid \text{Type}\;\text{`$\to$'}\;\text{Type} \mid \text{`('}\,\text{Type}\,\text{`)'}                  \\
		\text{Var}  & \to \text{Id}[x]                                                                                     & (\textsc{var}) \\
		\text{Lam}  & \to \text{`$\lambda$'}\,\text{Id}[a]\,\text{`:'}\,\text{Type}[t]\,\text{`.'}\,\text{Expr}[e]         & (\textsc{lam}) \\
		\text{App}  & \to \text{Expr}[l]\;\text{Expr}[r]                                                                   & (\textsc{app}) \\
		\text{Expr} & \to \text{Var} \mid \text{Lam} \mid \text{App}
	\end{aligned}
\]}
with one rule per annotated nonterminal, each a set of \emph{ascriptions} over shared holes $\hole{A}, \hole{B}$:
{\small
\[
	\frac{x : \hole{A} \ \text{in}\ \Gamma}{\,:\,\hole{A}}\,(\textsc{var})
	\;\;
	\frac{\Gamma[a{:}t] \vdash e : \hole{B}}{\,:\,t \to \hole{B}}\,(\textsc{lam})
	\;\;
	\frac{l : \hole{A} {\to} \hole{B} \quad r : \hole{A}}{\,:\,\hole{B}}\,(\textsc{app})
\]}

The \textsc{app} rule has no idea what a function is. Its premise $l : {\hole{A} \to \hole{B}}$ is one constructor pattern: $\to$ is a constructor \emph{because the grammar's type fragment wrote it}, and $\hole{A}, \hole{B}$ bind its subterms. There is no separate algebra of types, leaves are regular languages, and the arrow means only what hole-sharing across the three rules makes it mean. The inference notation is a front end to a small IR of term evaluation, ascription, equality, context membership, scoped extension, and right-bound effects, and the same IR runs the ML, C, and tool instances with no distinguished type or function instruction.

\subsection{Evidence and Constraint Graphs}\label{sec:evidence}
To keep the mechanism language-parametric, semantic evidence is represented only by terms over the grammar's signature, with \emph{ascription} for checking a term against an expected pattern and first-order \emph{unification} as the sole rewriting mechanism~\cite{robinson1965,martelli1982}. Context lookup, type equality, and structural matching are all ascriptions discharged by unification. Equality $\tau_1 = \tau_2$ is $\tau_1 : \tau_2$. Membership $x \in \Gamma$ is $x : \hole{A}$ solved against $\Gamma$. A shape such as $\hole{A} \to \hole{B}$ is matched constructor-against-constructor with each hole bound by position, with no decomposition step and no built-in notion of domain or codomain.

Parsing a prefix yields an AST whose nodes carry, at bound positions, \emph{evidence}: a status in $\{\mathsf{Exact}, \mathsf{Prefix}, \mathsf{Extensible}\}$ propagated bottom-up, a type term possibly containing holes, an exported context effect, and a binding map sending each binding either to $\mathsf{Resolved}(\nu)$ (the child's evidence, once the dot has passed it) or $\mathsf{Pending}$ (the child is past the right frontier). A binding moves from $\mathsf{Pending}$ to $\mathsf{Resolved}$, never the reverse. This is the available/unavailable split of incremental attribute grammars~\cite{demers1981incremental} and the typed-hole reading of an unfilled position~\cite{omar2017hazelnut}.

Rules relate evidence directly, so semantic information forms a graph over evidence rather than a tree. Firing a rule emits one unification edge per premise plus a conclusion edge, and holes shared between ascriptions become equality edges. This is the constraint-generation reading of type inference where a typing system is a set of unification constraints over a partial program.
A vertex is \emph{frontier} when it still carries an incomplete-input signal (status $\mathsf{Prefix}$/$\mathsf{Extensible}$, or a $\mathsf{Pending}$ binding). The graph at prefix $s$, written $\mathcal{G}(s)$, is \emph{open} when some edge touches a frontier vertex and \emph{closed} when it is not open and the solver finds no contradiction. The parser builds and solves this graph during Earley descent rather than after parsing. A stable contradiction removes that item immediately, while unresolved frontier evidence remains in the chart. Context effects are right-bound. Only an $\mathsf{Exact}$ node exports a declaration into the context its right siblings see.

\begin{proposition}[Fixed-prefix decidability]\label{prop:decidable}
	Let $G$ be finite with regular terminals, and let every semantic rule compile to a finite straight-line IR program interpreted in the free first-order theory with the occurs check and no declared rewrites. Compute the nullable nonterminals by least fixed point and define the zero-consumption left-corner relation $A \to_0 B$: some production of $A$ can reach an occurrence of $B$ after a prefix of nullable symbols. Suppose the $\to_0$ graph restricted to nonterminals reachable from $S$ is acyclic, and, separately, that every reachable nonterminal is productive (derives some finite terminal string). Then evaluation of any finite prefix and masking over any finite candidate set are decidable.
\end{proposition}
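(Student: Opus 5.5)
The plan is to reduce evaluation of a fixed finite prefix $s$ to a finite search over a finite structure, and then argue that each search step is decidable. First, the surface layer. Since $G$ is finite and every terminal recognizer is regular, the Earley chart for $s$ has finitely many item sets: $|s|+1$ positions, finitely many dotted productions, and finitely many origins. Terminal scanning goes through the segment interface. Each recognizer is a regular language, so its Brzozowski derivative after a consumed segment is again regular, and the status $\mathsf{Prefix}/\mathsf{Exact}/\mathsf{Extensible}$ can be decided: membership of $\varepsilon$ and emptiness of the residual are both decidable for regular languages. The surface chart is therefore computable.

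The delicate point is the semantic layer. Evidence is carried on AST nodes rather than on chart items, so the number of partial derivations over $s$ could be infinite. The source would be unbounded zero-consumption chains, such as nullable left recursion $A \to_0 \cdots \to_0 A$, or infinitely many nullable subtrees. I would use the acyclicity hypothesis here. Since $\to_0$ restricted to reachable nonterminals is acyclic, every chain of predictions or completions that consumes no input has length bounded by the number of nonterminals. A position in $s$ can only be spanned by derivation fragments of bounded height between two consumed characters. Hence the set of partial parse trees whose frontier covers $s$ is finite, with size at most $|s|$ times a grammar constant in each dimension. For the nullable completions I would fix canonical nullable derivations, computed by the least fixed point. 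For the pending right frontier I would represent unexpanded suffixes by $\mathsf{Pending}$ bindings instead of concrete subtrees. This way no enumeration of continuations is ever needed. Productivity guarantees that every pending nonterminal has some finite completion, so $\mathsf{Pending}$ is a sound placeholder and not a vacuous one. I expect this finiteness argument to be the main obstacle, in particular showing that distinct nullable derivations do not yield infinitely many semantically distinct evidence states. If a direct argument fails, I would first try quotienting nullable subtrees by their exported evidence, which is a term modulo hole renaming drawn from a finite straight-line computation.

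Given finitely many candidate trees, each carries finitely many rule firings. Each firing runs a finite straight-line IR program, which emits finitely many ascription and equality edges over first-order terms. Deciding whether $\mathcal{G}(s)$ has a stable contradiction then reduces to syntactic unification with the occurs check in the free theory, with no rewrites. This is decidable and terminates, by Robinson and Martelli--Montanari. Context membership $x : \hole{A}$ against $\Gamma$ is a finite disjunction over the finitely many declarations exported by $\mathsf{Exact}$ right-bound effects, so it is decidable as well. A branch is retained exactly when its graph is open or closed without contradiction, and the prefix is retained exactly when some branch is. Both are finite disjunctions of decidable predicates.

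Finally, masking over a finite candidate set $C$ amounts to running the above decision procedure on $s c$ for each $c \in C$. Each $s c$ is itself a finite prefix, so the mask is computed by $|C|$ decidable calls, which is decidable.
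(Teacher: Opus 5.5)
Your overall route matches the paper's sketch. Both use a finite Earley chart, use acyclicity of $\to_0$ to bound zero-consumption chains and hence the semantic elaborations attributable to any span, rely on termination of syntactic unification with the occurs check, and treat masking as $|C|$ independent prefix evaluations. The paper counts evidence refinements per chart core and closes with a fixed point of the annotated chart, since pruning only removes items. You instead enumerate whole partial trees, and that difference is presentational.

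The problem is the step you flag as the main obstacle, and the workaround you propose for it would fail.

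\emph{Do not canonicalize nullable derivations.} The verdict is existential over branches. Suppose two $\varepsilon$-derivations of one nullable nonterminal fire rules with different conclusions, one ascribing \texttt{Int} and another \texttt{Bool}. Keeping only a canonical one can turn a \Live{} prefix into \Lost{}. That decides the wrong predicate, and it also violates safe pruning.

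\emph{Finiteness follows directly from the hypothesis, so you can enumerate them all.} In an $\varepsilon$-derivation, every child of a node is nullable and sits after a nullable prefix of its parent's production. So every parent--child edge inside such a subtree is a $\to_0$ edge. Hence $\varepsilon$-subtrees have height bounded by the longest $\to_0$ path among reachable nonterminals, and branching bounded by the longest production. There are therefore finitely many of them, and they can be enumerated along with the rest of your finite forest. Your fallback of quotienting by exported evidence is then unnecessary.

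\emph{Separately, your semantic check omits regular-leaf clashes.} It mentions only unification and context lookup. But evidence terms have regular residuals as leaves: an identifier still in status $\mathsf{Prefix}$ is a language, not a constant. You therefore also need regular-leaf clashes, decided by emptiness of regular intersections, which the paper lists explicitly. The same applies to deciding whether a failed lookup is stable: you must check whether any declared name lies in the identifier's residual.
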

\noindent\emph{Proof (sketch).} Earley parsing over a finite input has a finite chart of item cores. Semantic elaboration can multiply items only by re-deriving the same core with different evidence. Acyclicity of $\to_0$ bounds the number of rule firings attributable to any span, so each core carries finitely many semantic refinements and context effects. Terms are finite, regular-leaf intersection is decidable, and first-order unification terminates with the occurs check. Semantic pruning only removes annotated items, so the annotated chart reaches a fixed point, and the same procedure applied to each member of a finite candidate set terminates. \qed

Zero-consumption is needed, because a unit cycle such as $A \to B$, $B \to A \mid \texttt{x}$ consumes no input while its rules wrap evidence on every traversal, giving one finite span unboundedly many annotations. Productivity covers a separate failure mode, since a consuming cycle can still be unproductive, as $S \to \texttt{x}\,S$ is. Both are decidable by least-fixed-point computation and are checked when a grammar is loaded.
The proposition decides whether a \emph{fixed} finite continuation survives the semantic Earley chart. Whether an arbitrary prefix has \emph{some} valid suffix is a different question, and needs the analysis below. Declared normalization rewrites are outside Proposition~\ref{prop:decidable}, since they need separate termination and confluence arguments.

\subsection{Verdicts: Safe Pruning and Dead-End Freedom}\label{sec:soundness}
\begin{definition}[Witness, denotation, verdict]\label{def:eval}
    A \emph{witness} for a prefix $s$ is $r \in \Sigma^\ast$ such that $sr$ admits a complete parse of the start symbol whose graph $\mathcal{G}(sr)$ is closed and consistent. The \emph{denotation} $\mathcal{D}(s)$ is the set of witnesses. The semantically valid language $L_G$ is the set of complete inputs with an empty-witness parse. The evaluator returns
    \[
        \evalf(\mathcal{G}(s)) =
        \begin{cases}
            \Satisfied & \text{closed and consistent}, \\
            \Lost      & \text{stable contradiction}, \\
            \Live      & \text{otherwise}.
        \end{cases}
    \]
    A contradiction is \emph{stable} when later input cannot repair it. This includes rigid constructor clashes, regular-leaf clashes, occurs-check failures, and failed lookups in the fixed left context. Ambiguity is existential: a prefix is \Lost{} only when every compatible branch is \Lost.
\end{definition}

\begin{theorem}[Safe pruning]\label{lem:safe}
    Aufbau never rejects a prefix that has a witness.
\end{theorem}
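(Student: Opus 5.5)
The plan is to prove the contrapositive by tracking a single branch. Suppose $s$ has a witness $r$, so $sr$ admits a complete parse $\pi$ of $S$ whose graph $\mathcal{G}(sr)$ is closed and consistent. From $\pi$ we obtain the \emph{restriction} $\pi|_s$: the partial derivation that Earley descent builds on $s$. It consists of the items whose cores are the prefixes of $\pi$'s productions, with dots placed at the positions reached after consuming $s$. Aufbau rejects $s$ only when every compatible branch is \Lost, by the existential reading of Definition~\ref{def:eval}. It therefore suffices to show that the branch $\pi|_s$ is retained in the chart and is never declared \Lost.

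\textbf{Step 1: the surface layer keeps $\pi|_s$.} Earley prefix parsing accepts exactly the completable prefixes, and it keeps every item consistent with some completion. Each terminal segment of $s$ that is cut by the frontier is a prefix of a full lexeme in $\pi$. Its recognizer therefore returns $\mathsf{Prefix}$ or $\mathsf{Extensible}$ rather than failing, and its regular residual is nonempty.

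\textbf{Step 2: an embedding of graphs.} Every vertex of $\mathcal{G}(s)$ on the branch $\pi|_s$ maps to a vertex of $\mathcal{G}(sr)$, and every edge maps to an edge, using the following facts. Bindings move only from $\mathsf{Pending}$ to $\mathsf{Resolved}$, never back. A $\mathsf{Resolved}$ child's evidence in $s$ already equals its evidence in $sr$, because the subtree lies entirely to the left of the frontier. Exported context effects are right-bound, and only $\mathsf{Exact}$ nodes export. The left context seen at any point of $s$ is therefore a prefix of the context at the same point of $sr$: later input can add declarations to the right, but never to the left of that point. Frontier vertices carry terms with holes, and in $\pi$ these holes are instantiated further. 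The result is that the constraint set at $s$ is a subset of the constraint set at $sr$, up to renaming holes to the terms that $\pi$ assigns them. In particular, the substitution $\sigma$ that solves $\mathcal{G}(sr)$, which exists because the graph is consistent, also solves $\mathcal{G}(s)$ once we extend $\sigma$ with fresh choices for holes that never became constrained.

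\textbf{Step 3: ruling out every stable contradiction.} Each kind of stable contradiction would survive into $sr$:
\begin{itemize}
\item a rigid constructor clash or an occurs-check failure between two terms of $\mathcal{G}(s)$ remains a clash under any further instantiation, contradicting $\sigma$;
\item a regular-leaf clash, meaning an empty intersection of residual languages, persists because residuals only shrink;
\item a failed lookup in the fixed left context persists because that context is unchanged in $sr$ (Step 2).
\end{itemize}
Hence $\evalf(\mathcal{G}(s))\neq\Lost$ on $\pi|_s$, and $s$ is not rejected.

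\textbf{Main obstacle.} I expect the hard part to be Step 2's claim that the oracle's ``stable'' tests are sound, that is, that the solver reports \Lost\ only on constraints that are genuinely monotone under extension. The delicate cases are lookups whose position is ambiguous between the left context and a not-yet-exported declaration. I would handle these by showing that lookup failure is only flagged once the consulting node's left context is closed by $\mathsf{Exact}$ siblings. I would also check that the evidence which semantic elaboration attaches to a core equals $\pi$'s evidence, so that the branch is not duplicated and then pruned under a wrong refinement. To do this, I would argue by induction on Earley steps that every item derived from $\pi$ carries the most general unifier of its constraints, which is then instantiated by $\sigma$.
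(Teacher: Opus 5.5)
Your proposal is correct and follows the paper's own argument. You restrict the witness parse of $sr$ to a compatible branch at $s$, then observe that each stable-contradiction form (rigid clash, occurs-check failure, regular-leaf clash, failed lookup in the fixed left context) is monotone under hole instantiation, residual refinement, and right-bound context extension, so that branch cannot be \Lost.

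Your substitution embedding of $\mathcal{G}(s)$ into $\mathcal{G}(sr)$ and the most-general-unifier induction you flag add rigor that the paper leaves implicit in its premise that every rejection is a stable contradiction. One small justification needs fixing: Brzozowski residuals do not literally shrink. An empty intersection persists because derivatives commute with intersection and the derivative of the empty language is empty. Equivalently, the set of full lexemes compatible with the consumed segment only shrinks.
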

\begin{proof}
    Every rejection is justified by a stable contradiction on every compatible branch. Extending a branch can instantiate unresolved holes, refine regular residuals as more input is consumed, and append right-bound context effects. The rejection forms above are monotone under those operations: a rigid constructor clash remains a clash, an empty regular intersection cannot be restored by further refinement, an occurs-check failure cannot be repaired by later substitution, and a failed lookup in the fixed left context cannot be repaired by declarations exported to the right. Thus a contradiction classified as stable persists on every extension of that branch. If $s$ had a witness $r$, the complete consistent parse of $sr$ would restrict to a compatible branch at $s$ that is not \Lost, contradicting rejection.
\end{proof}

The implementation is conservative about information that is not yet fixed. In particular, prediction-time checks use only constraints already determined by the consumed prefix. Missing information can therefore delay a rejection, but it cannot create one. This matters for dead-end freedom, not for safe pruning.

Dead-end freedom is a property of a grammar in addition to the oracle. A \Live{} prefix has not contradicted the grammar, but it may still have no completion. Plain STLC gives the simplest example. With distinct atomic types $A$ and $B$, the prefix $\lambda f{:}A{\to}B.\ f($ demands a term of type $A$. If the language has no closed term of type $A$, the prefix is \Live{} but has no witness.

For the typed grammars in this paper, the relevant notion of inhabitation is simply \emph{type coverage}: every type that can be demanded at a reachable prefix has some finite closed term of that type. A universal inhabitant is one way to obtain coverage, but it is not required. Throughout, a parser branch, nonterminal, regular residual, or type demand is \emph{reachable} when it occurs after consuming some finite input prefix from the start symbol on a compatible branch, before that branch becomes \Lost.

\begin{definition}[Dead-end-free grammar class]\label{def:inhabitation}
    We use the following sufficient conditions for dead-end freedom.
    \begin{enumerate}
        \item[(P)] \emph{Surface productivity.} Every reachable nonterminal has a finite completion, and every reachable regular residual is nonempty.
        \item[(T)] \emph{Type coverage.} Every reachable type demand has a finite closed realizer whose construction does not invalidate the already fixed left context.
        \item[(F)] \emph{Left-to-right constraint flow.} A child's semantic obligations are determined by the parent demand, the fixed context, and children to its left. Later children do not constrain earlier ones.
    \end{enumerate}
\end{definition}

\begin{theorem}[Dead-end freedom under coverage]\label{prop:complete}
    A grammar satisfying (P), (T), and (F) is dead-end free: every \Live{} prefix has a witness.
\end{theorem}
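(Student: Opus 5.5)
The plan is a constructive witness construction. Given a \Live{} prefix $s$, we fix a compatible branch that is not \Lost{}; one exists because a prefix is \Lost{} only when every branch is. We then build a suffix $r$ by completing the open frontier of that branch. The frontier is a stack: the innermost partially consumed terminal, then the pending right siblings of each enclosing item, from the innermost Earley item out to the start symbol. We close it from the inside out. This order is the natural one because, by (F), the obligations of a pending child depend only on the parent demand, the fixed left context, and its left siblings, and those are exactly what is already determined when we reach it.

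The proof proceeds by induction on the number of pending positions. Each inductive step treats one of three cases.
\begin{enumerate}
\item If the position is a partially consumed terminal, its regular residual is nonempty by (P). We append any string in it, which makes the terminal $\mathsf{Exact}$.
\item If the position is a pending nonterminal with no semantic demand, we append the finite completion guaranteed by (P).
\item If the position is a pending child carrying a type demand $\tau$, we first use (F) to argue that $\tau$ is fully determined up to holes the left context has not fixed. Any remaining holes are instantiated to some covered type; choosing the demand the branch already requires keeps this consistent. We then use (T) to append a finite closed realizer of $\tau$.
\end{enumerate}
After each step we check that the branch stays non-\Lost{} and that the induction measure decreases. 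Non-\Lost{} status is preserved because the realizer is consistent with the fixed left context by (T). It also cannot contradict anything to its left, because by (F) later children impose no constraints on earlier ones. Right-bound effects exported by the new $\mathsf{Exact}$ child only extend contexts seen further right, so they cannot invalidate established lookups. When the stack is empty, the parse of $sr$ is complete, every binding is $\mathsf{Resolved}$, and no vertex is frontier. Hence $\mathcal{G}(sr)$ is closed, and it is consistent by the invariant, so $r\in\mathcal{D}(s)$.

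The main obstacle is the semantic-demand step, specifically the interaction between unresolved holes and consistency. A \Live{} branch may carry a constraint graph with holes shared across positions, for instance $\hole{A}$ in \textsc{app}. The danger is that instantiating a hole to satisfy one pending child could make another pending child's demand uninhabited, or clash with an edge that is already fixed. I would attack this with (F), by showing that every equality edge touching a pending child points leftward or to the parent. The demand at each pending child is then a function of resolved evidence, and choosing its realizer only binds holes local to that child's subtree. I would also make explicit that "reachable type demand" in (T) includes demands obtained after such instantiations. If that reading is too weak, the fallback is to strengthen the invariant to "the current partially instantiated graph is consistent and every demand is covered." We would then verify that this invariant is preserved by each append, using the monotonicity facts from the proof of Theorem~\ref{lem:safe} in the converse direction: no new stable contradiction arises from a realizer consistent with its demand.
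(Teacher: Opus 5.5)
Your construction is essentially the paper's proof. Both complete the leftmost unfinished position repeatedly, closing syntactic and regular positions with (P) and typed positions with a (T)-realizer of the demand that (F) fixes from the left, so the process moves strictly rightward and every inserted term meets the demand it was inserted under. Your shared-hole discussion is more explicit than the paper's one-line consistency claim. One correction there: a realizer can bind holes shared with right siblings (e.g.\ $\hole{A}$ in \textsc{app}), so it does not bind only holes local to its subtree, but this is harmless because (F) lets later demands depend on those holes and (T) covers the resulting demand, which is reachable at the extended prefix.

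One step needs tightening, and the paper's sketch passes over it too. In case~1, a partially consumed identifier in a variable position cannot take an arbitrary residual string: after \texttt{let yy : Int = 1; let z : Int = y}, appending nothing or \texttt{z} fails lookup. You must instead complete it to a name in $\Gamma$ whose type unifies with the current demand. Neither (P) nor (T) guarantees that such a name exists; that has to come from the oracle treating a partial lookup with no compatible completion as a stable failure.
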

\begin{proof}[Proof sketch]
    Complete the leftmost unfinished position repeatedly. Condition (P) closes purely syntactic and regular positions. When a typed expression is required, (F) makes its demanded type depend only on information already fixed to its left, and (T) supplies a finite closed term of that type. Repeating this process moves strictly rightward and eventually closes the parse. The chosen terms satisfy the demands under which they were inserted, so the resulting completion is consistent.
\end{proof}

The conditions are deliberately sufficient rather than a characterization of every grammar on which Aufbau happens to avoid dead ends. They are useful because they separate the oracle property from the language property. Safe pruning holds independently. Dead-end freedom follows once a grammar has productive syntax, type coverage, and left-to-right semantic flow.

The examples make the boundary concrete, and the boundary runs through our own instances. Our STLC instance (\S\ref{sec:instances}) does not satisfy (T): its base-type universe is unrestricted, so an atomic type may have no closed inhabitant, and the $\lambda f{:}A{\to}B.\ f($ prefix above is a prefix of that grammar. Coverage is recoverable by construction: restrict the base-type universe to a finite set in which every base type has a literal. The literals cover the base types, and arrow types follow inductively, since a closed realizer $e_B$ of $B$ makes $(x{:}A) \Rightarrow e_B$ a closed realizer of $A \to B$. The finite-lambda instance below uses exactly this construction. We did not run the reported STLC experiments on that restricted instance. Core ML has a simpler cover because \texttt{assert false} can be assigned every expression type. The C-like fragment also covers every reachable expression-type demand in its fragment: scalar types have literals, and explicit casts from a scalar literal realize pointer types and the other concrete expression types admitted by the grammar. The tool DSL has no comparable universal cover. With an empty context, \texttt{let s = summarize(} demands a \texttt{docs} value, but the grammar has no closed \texttt{docs} literal, so the prefix can remain \Live{} with no completion.

Nothing in this argument requires the implementation to decide (P), (T), and (F) automatically. They state when the stronger dead-end-free guarantee applies. The implementation itself only relies on the safe-pruning contract.

\subsection{Worked Example}\label{sec:example}
Figure~\ref{fig:worktree} illustrates the three verdicts. The complete term resolves $x$ in the fixed context and closes the graph. A pending body leaves the graph open and therefore \Live, without guaranteeing a witness. The exact variable $y$ fails lookup in $\Gamma=[x{:}\mathtt{int}]$, yielding a stable contradiction that syntax alone cannot detect.

\begin{figure}[t]
	\centering
\footnotesize
\resizebox{\columnwidth}{!}{%
	\begin{tikzpicture}[
			every node/.style={font=\footnotesize},
			lvl/.style={align=left, inner sep=1pt},
			ev/.style={font=\ttfamily\scriptsize, text=ink, align=center, inner sep=1pt},
			pending/.style={font=\ttfamily\scriptsize, text=purple, align=center, draw=purple, dashed, rounded corners=1pt, inner xsep=3pt, inner ysep=2pt},
			rule/.style={font=\scriptsize\itshape, text=green},
			bad/.style={font=\scriptsize\itshape, text=red},
			astedge/.style={draw=grid, line width=0.8pt, line cap=rect, line join=miter},
			goodedge/.style={draw=green, line width=1pt, line cap=rect},
			badedge/.style={draw=red, line width=1pt, line cap=rect},
			branchlabel/.style={font=\tiny\itshape, text=grid, fill=white, inner xsep=1pt, inner ysep=0pt},
			separator/.style={draw=grid, line width=0.45pt},
			panelletter/.style={font=\scriptsize\bfseries, text=ink}
		]
		\newcommand{\ty}[1]{{\texttt{#1}}}
		\draw[separator] (17mm,5mm) -- (17mm,-22mm);
		\draw[separator] (49mm,5mm) -- (49mm,-22mm);
		\begin{scope}
			\node[font=\scriptsize\bfseries, text=orange] (aprog) at (0,4mm)
			{$\lambda x{:}\mathrm{int}.\;x$ \quad \Satisfied};
			\node[lvl] (a0) at (0,0)
			{\textsc{Lam} \textrm{(lam)} \;:\; \ty{int$\to$int}};
			\node[lvl, below=4.5mm of a0, xshift=6mm] (a1)
			{\textsc{Var} \textrm{(var)} \;:\; \ty{int}};
			\draw[astedge]
			($(a0.south west)+(2mm,0)$)
			|- node[branchlabel, pos=0.76, above] {body}
			(a1.west);
			\node[ev, below=3mm of a1] (a2)
			{$x \mapsto \ty{int}$\\
			{\normalfont in $\Gamma=[x{:}\mathtt{int}]$}};
			\draw[goodedge] (a1.south) -- (a2.north);
			\node[panelletter, below=2mm of a2] {(a)};
		\end{scope}
		\begin{scope}[xshift=31mm]
			\node[font=\scriptsize\bfseries, text=purple] (bprog) at (0,4mm)
			{$\lambda x{:}\mathrm{int}.$ \quad \Live};
			\node[lvl] (b0) at (0,0)
			{\textsc{Lam} \textrm{(lam)} \;:\; \ty{int$\to$\,\hole{B}}};
			\node[pending, below=5mm of b0, xshift=5mm] (b1)
			{body $e$\\[-1pt] \textsc{Pending}};
			\draw[astedge]
			($(b0.south west)+(2mm,0)$)
			|- node[branchlabel, pos=0.74, above] {body}
			(b1.west);
			\node[panelletter, below=7mm of b1] {(b)};
		\end{scope}
		\begin{scope}[xshift=63mm]
			\node[font=\scriptsize\bfseries, text=red] (cprog) at (0,4mm)
			{$\lambda x{:}\mathrm{int}.\;y$ \quad \Lost};
			\node[lvl] (c0) at (0,0)
			{\textsc{Lam} \textrm{(lam)} \;:\; \ty{int}};
			\node[lvl, below=4.5mm of c0, xshift=6mm] (c1)
			{\textsc{Var} {\itshape\color{red}(var)}};
			\draw[astedge]
			($(c0.south west)+(2mm,0)$)
			|- node[branchlabel, pos=0.76, above] {body}
			(c1.west);
			\node[ev, text=red, below=3mm of c1] (c2)
			{$y \notin \Gamma$\\
			{\normalfont $\Gamma=[x{:}\mathtt{int}]$}};
			\draw[badedge] (c1.south) -- (c2.north);
			\node[panelletter, below=2mm of c2] {(c)};
		\end{scope}
	\end{tikzpicture}%
}
	\caption{The oracle's evidence on three prefixes of the minimal grammar. Gray
	right-angle links show AST structure, colored links semantic evidence.
	The complete term closes at $\mathtt{int}\to\mathtt{int}$. The missing
	body stays \textsc{Pending}. The exact variable $y$ fails lookup in
	$\Gamma=[x{:}\mathtt{int}]$, making the last prefix \Lost.}
	\Description{Three evidence graphs. The complete term lambda x colon int dot x is Satisfied with type int to int, lambda x colon int with a pending body is Live and lambda x colon int dot y is Lost because y is absent from the context.}
	\label{fig:worktree}
\end{figure}
\section{\aufbau, \ptool, and the Instances}\label{sec:tool}

\subsection{Architecture}\label{sec:arch}

A language definition is lowered into Aufbau's semantic grammar IR, productions, regular terminals, binding positions, compiled rule programs, context effects, with \texttt{.auf} as one readable front end and the API able to construct the same IR directly. \aufbau executes it inside an Earley parser~\cite{earley1970efficient} whose items carry the evidence and constraint graph of \S\ref{sec:evidence}.

 The entry point \texttt{try\_feed} extends the current prefix by a candidate string and returns the three-valued verdict. \pseven puts this oracle in the decoding loop (Figure~\ref{fig:loop}).

\begin{figure}[t]
	\centering
	\begin{tikzpicture}[
		system/.style={
				draw,
				rounded corners=6pt,
				minimum width=1.75cm,
				minimum height=0.78cm,
				align=center,
				font=\scriptsize\bfseries,
				line width=0.8pt
			},
		outcome/.style={
				draw,
				rounded corners=4pt,
				minimum width=2.05cm,
				minimum height=0.72cm,
				align=center,
				font=\tiny,
				line width=0.7pt,
				inner xsep=3pt
			},
		flow/.style={
				->,
				line width=0.85pt,
				draw=ink!70
			},
		label/.style={
				font=\tiny,
				fill=white,
				inner sep=1pt
			}
	]

	\node[
		system,
		fill=blue!10,
		draw=blue!70!black
	] (llm) {
		LLM\\model
	};

	\node[
		system,
		fill=green!10,
		draw=green!70!black,
		right=1.6cm of llm
	] (p7) {
		P7\\controller
	};

	\node[
		system,
		fill=orange!11,
		draw=orange!75!black,
		right=1.4cm of p7
	] (oracle) {
		Aufbau\\oracle
	};

	\draw[flow]
	(llm.east)
	--
	node[label] {token}
	(p7.west);

	\draw[flow]
	([yshift=-0.12cm]p7.north east)
	-- node[label] {\texttt{try\_feed}}
	([yshift=-0.12cm]oracle.north west);

	\draw[flow, <-]
	([yshift=0.12cm]p7.south east)
	-- node[label] {verdict}
	([yshift=0.12cm]oracle.south west);

	\node[
		outcome,
		fill=red!9,
		draw=red!72!black,
		below left=1.3cm and 0.2cm of p7
	] (lost) {
		\textbf{\textsc{Lost}}\\
		exclude and resample
	};

	\node[
		outcome,
		fill=orange!9,
		draw=orange!72!black,
		below=1.3cm of p7
	] (live) {
		\textbf{\textsc{Live}}\\
		commit and continue
	};

	\node[
		outcome,
		fill=green!11,
		draw=green!72!black,
		below right=1.3cm and 0.2cm of p7
	] (satisfied) {
		\textbf{\textsc{Satisfied}}\\
		commit and finish
	};

	\draw[flow, draw=red!70!black]
	(p7.south west)
	--
	(lost.north);

	\draw[flow, draw=orange!75!black]
	(p7.south)
	--
	(live.north);

	\draw[flow, draw=green!70!black]
	(p7.south east)
	--
	(satisfied.north);

	\draw[
		flow,
		draw=red!62!black
	]
	(lost.west) -- ++(-0.20,0)
	-- (llm.south)
	node[label, pos=0.4] {resample};

\end{tikzpicture}
	\caption{\textbf{P7 constrained-decoding architecture.} The model proposes a
		token, P7 submits it to the Aufbau prefix oracle, and the oracle
		returns one of three verdicts: \textsc{Lost} is excluded and
		resampled, \textsc{Live} is committed and decoding continues,
		\textsc{Satisfied} completes the derivation.}
	\Description{An LLM sends a proposed token to P7, which calls the Aufbau oracle. Lost excludes and resamples the token, Live commits it and continues, and Satisfied commits it and finishes.}
	\label{fig:loop}
\end{figure}

 Two properties of the loop matter below. The mask never forces completion: the model's own stop tokens stay admissible mid-derivation, so \S\ref{sec:gen} counts abandonments rather than hiding them. The mask prunes rather than repairs: given a model with no useful solution, it can turn free-form garbage into well-formed garbage that still fails grading.

\paragraph{Tokenizer lifting.}

\pseven feeds \aufbau tokens, not characters, so the guarantees above must cross that boundary. Write $\Sigma_G = \bigcup_{R \in T} \operatorname{alph}(R)$ for the union of the recognizers' alphabets, $\ell : V \to \Sigma_G^\ast$ for the exact spelling map on ordinary text tokens of the vocabulary $V$ (control tokens and \textsc{eos} follow the completion policy above), and $L_V = \{\ell(v) \mid v \in V\}$.

 The \emph{coverage hypothesis} $\Sigma_G \subseteq L_V^\ast$ gives $\Sigma_G^\ast \subseteq L_V^\ast$, since $L_V^\ast$ is closed under concatenation.
\begin{lemma}[Tokenizer lifting]\label{lem:tok}
	Under exact spelling (the oracle consumes, byte-for-byte, the text the model conditions on) and coverage, every character-level witness factors into a finite token sequence, and regular derivatives carry the oracle's state across a token ending inside a terminal recognizer. Character-level witness existence therefore lifts to token-sequence reachability.
\end{lemma}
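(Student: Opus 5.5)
We prove the three claims of Lemma~\ref{lem:tok} in order: factorization, state transport, and lifting. Let $s$ be a committed prefix with character-level witness $r \in \mathcal{D}(s)$.

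\emph{Factorization.} First we place $r$ in $\Sigma_G^\ast$. Every character of $sr$ is consumed by some terminal recognizer of the complete parse, so $r \in \Sigma_G^\ast$. By coverage, each character $c \in \Sigma_G$ has a factorization $c = \ell(v_{c,1})\cdots\ell(v_{c,k_c})$ with finite $k_c$, since $c \in L_V^\ast$. We then concatenate these factorizations character by character. This yields a finite sequence $v_1\cdots v_n$ with $\ell(v_1)\cdots\ell(v_n) = r$. The sequence is crude: it need not be the tokenizer's canonical segmentation. That is enough here, because the decoding loop feeds arbitrary admissible tokens and does not retokenize. Exact spelling guarantees that the oracle sees precisely $s\,\ell(v_1)\cdots\ell(v_i)$ after $i$ steps.

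\emph{State transport across token boundaries.} We show that feeding a token and feeding its spelling character by character produce the same oracle state. We argue by induction on the length of a token's spelling. A token may end strictly inside a terminal. In that case the segment interface returns $\mathsf{Prefix}$ or $\mathsf{Extensible}$ together with the Brzozowski derivative $\partial_u R$ of the recognizer by the consumed segment $u$. Since $\partial_{uw}R = \partial_w(\partial_u R)$, resuming the next token from the stored residual is equivalent to consuming $uw$ at once. On the semantic side, such a terminal's evidence stays at frontier status, so no rule fires on it prematurely. Context effects are exported only by $\mathsf{Exact}$ nodes. Hence the Earley chart with its evidence after $i$ tokens coincides with the chart after the corresponding characters.

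\emph{Lifting.} Apply state transport to each intermediate prefix $s\,\ell(v_1)\cdots\ell(v_i)$. Each of these is a character prefix of $sr$, so it has the witness $\ell(v_{i+1})\cdots\ell(v_n)$. By Theorem~\ref{lem:safe}, \tryfeed{} never returns \Lost{} on it. At the end, $sr \in L_G$ has a closed, consistent parse, so the verdict is \Satisfied. The token path is therefore reachable through the decoding loop.

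\emph{Main obstacle.} We expect the transport step to be the hardest. The risk is that some semantic check fires on a partial segment, for instance a lookup of an identifier cut mid-token. The plan is to show that such checks fire only on $\mathsf{Exact}$ status, and $\mathsf{Exact}$ status is only ever reached at a character boundary that is also reached character-wise. A second point to handle is an $\mathsf{Exact}$ match that is also $\mathsf{Extensible}$. Both the token-level and the character-level runs keep the extend branch in the chart, so the two charts still coincide.
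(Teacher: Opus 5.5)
Your proposal is correct and follows essentially the paper's route. Coverage factors the witness into spellings $\ell(v_1)\cdots\ell(v_n)$ that \pseven{} submits verbatim, and the identity $\partial_{uw}R=\partial_w(\partial_u R)$ makes the residual after a token ending mid-terminal agree with the character-level run; your appeal to Theorem~\ref{lem:safe} on each intermediate prefix $s\,\ell(v_1)\cdots\ell(v_i)$ makes explicit a step the paper leaves implicit. One adjustment: do not rest the transport step on the claim that semantic rules never fire on frontier or non-$\mathsf{Exact}$ evidence, because in the paper's model rules emit edges on frontier evidence (that is what makes a graph open) and regular-leaf clashes act on residuals; it suffices that the oracle's state at each token boundary depends only on the consumed string, which the derivative composition secures.
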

\begin{proof}
	Coverage factors the witness into a concatenation of spellings $\ell(v_1)\cdots\ell(v_k)$, exactly the string \pseven submits token by token, and composing the recognizer's derivative across $v_1,\ldots,v_k$ reaches the residual the character-level run would.
\end{proof}

\subsection{Language Instances}\label{sec:instances}

\paragraph{STLC}
A monomorphic simply-typed lambda calculus with mandatory binder annotations, used both by the oracle's validation harnesses and as the STLC arm of the generation study. Its base-type universe is unrestricted, so this instance does not satisfy the type-coverage condition (T) of \S\ref{sec:soundness} and we make no dead-end-freedom claim for it: the $\lambda f{:}A{\to}B.\ f($ counterexample is a prefix of this grammar. Safe pruning applies to it exactly as to every other instance.

\paragraph{Finite Lambda (\texttt{fun.auf}).}
A variant of the lambda calculus with modern surface syntax, \texttt{let}-bindings, and a base-type universe restricted to \texttt{Int}, \texttt{Bool}, and \texttt{Float}, each equipped with literals. These literals cover the base types. Function types are covered inductively: if $e_B$ is a closed realizer of $B$, then \texttt{(x: A) => }$e_B$ is a closed realizer of $A \to B$. Together with productive syntax and left-to-right constraint flow, the instance therefore satisfies (P), (T), and (F), unlike the unrestricted STLC instance used in the experiments.

\paragraph{Core ML (\texttt{ml.auf}).}
A monomorphic ML fragment with \texttt{let}/\texttt{let rec}, functions, application, integers, booleans, comparison and arithmetic, lists (\texttt{T list}, cons, \texttt{[]}), \texttt{if}/\texttt{then}/\texttt{else}, and \texttt{assert false}. The concrete syntax is a strict OCaml subset, so every accepted program feeds \texttt{ocamlc} with no translation. This makes differential prefix validation possible (\S\ref{sec:cert}). The axiom $\texttt{assert false} : \hole{A}$ is a universal expression inhabitant, so every reachable ML expression-type demand is covered. Checking-mode \texttt{let rec} is incremental because its signature is fixed before the body streams.

\paragraph{A C-like fragment (\texttt{c.auf}).}
A typed C-like fragment whose accepted output is graded by \texttt{cc -fsyntax-only}: declarations and assignment, typed functions and calls, \texttt{if}, \texttt{while}, and \texttt{for}, arithmetic and comparison over \texttt{int}, \texttt{float}, and \texttt{char}, pointers with address-of and dereference, and explicit casts.

The fragment threads each function's declared return type through an ambient context entry and checks every \texttt{return} against it, admitting a bare \texttt{return;} only in a \texttt{void} function.\footnote{The return-type check postdates the generation study of \S\ref{sec:gen}, which ran on the earlier version of \texttt{c.auf} without it. The reported C generation numbers are therefore for that earlier grammar and rerunning them is future work. \S\ref{sec:gen} repeats this caveat where the numbers appear.} We still call it C-like deliberately: it does not prove that every path through a non-\texttt{void} function returns, does not constrain the operand of a cast or of address-of, and its demanded-type universe includes \texttt{void} and unbounded pointer depth, so Aufbau-accepted programs are not guaranteed to be accepted, or accepted without diagnostics, by a C compiler. 

For dead-end freedom, the relevant point is narrower than full C validity. Every reachable expression-type demand in this fragment has a closed realization: scalar types have literals, while explicit casts from a scalar literal realize the pointer types admitted by the grammar. Together with the fragment's productive statement forms and left-to-right declaration flow, this gives the coverage used in Theorem~\ref{prop:complete}. The differential harness of \S\ref{sec:cert} separately checks compatibility with \texttt{cc}.

Two design points carry the semantics. A two-sort type discipline keeps the concrete sort \texttt{CType} (scalars and pointers, what a cast or declaration can spell) distinct from the internal signature sort that types functions. The latter is unreachable from concrete syntax, so no source expression can demand a function-shaped type. And definition-before-use holds at translation-unit granularity: a call is checked against a signature already exported by an earlier definition's right-bound effect, which a left-to-right oracle handles without lookahead because the definition precedes the call in the token stream.

\paragraph{Declarative boundary: self-recursion.}
Inference-mode recursion cannot be declared in this fragment. \texttt{let rec} without an annotation was implemented and dropped after some empirical failure evidence. The problem is that the recursive call races the still-streaming parameter list to resolve the signature metavariable, and an early unification can commit it to a shape the header later contradicts, producing a false prune rather than a real error. Checking-mode recursion, whose signature is exact before the body opens, has no such race, so the declarable fragment admits checking-mode program recursion only and inference-mode program recursion is excluded as a restriction of the fragment. It could theoretically be implemented but would require dropping the level of constraints which isn't very interesting here.

This important limitation was found by our test system probing \texttt{ocamlc} against the oracle. Indeed, some experiments on recursion corners surfaced the discrepancy (\S\ref{sec:cert}). The instrumentation that rejects the model's tokens is what caught the grammar author. This establishes the importance of our differential certification approach for implementation correctness.

\paragraph{The agentic \texttt{tool} DSL dialects.}

The third instance exercises the semantic layer as an agent harness. Constrained decoding is usually evaluated on single-shot synthesis, but multi-turn tool use, where the typed context grows as the episode runs, is the natural application for a typed oracle. \texttt{tool.auf} is an invented typed pipeline language over a fixed registry (\texttt{search}: string$\to$docs, \texttt{summarize}: docs$\to$string, \texttt{count}: docs$\to$int, \texttt{format}: int$\to$string) with \texttt{let}-sequencing and \texttt{return}.

Being invented, it has no pretraining prior and no external compiler, so an episode is graded by the \emph{value} its \texttt{return} evaluates to against the registry's real implementations rather than by well-typedness. Each turn is one constrained step. The binding it introduces is pushed into $\Gamma$, so turn $n{+}1$ decodes under the types established by turns $1..n$, threading the constraint graph of \S\ref{sec:evidence} across the episode.

Three \emph{dialects} share one type discipline: a C-like \texttt{let} syntax (\texttt{tool.auf}), S-expressions (\texttt{tool\_sexpr.auf}), and XML elements (\texttt{tool\_xml.auf}). Only the surface varies, so the dialect axis separates semantic discipline from syntax priors and tokenizer alignment.

The dialects also show why registry reachability is weaker than type coverage. With an empty context, \texttt{let s = summarize(} is \Live{} with no completion because the demanded \texttt{docs} argument has no literal form and no binding supplies one. We therefore make no dead-end-freedom claim for the tool DSLs. A 51-case harness checks the checker given a witness by injecting a binding of the demanded type at each demanding argument position and confirming acceptance.

\section{Empirical Findings}\label{sec:eval}
The evidence comes in two tiers with different epistemic status, and we keep them apart.

\paragraph*{Tier 1 (\S\ref{sec:cert})} tests the \emph{oracle}: CPU-only, deterministic, regenerable byte-for-byte by a reviewer with no GPU.
\paragraph*{Tier 2 (\S\ref{sec:gen})} measures what the oracle buys in \emph{generation}: GPU runs over open models, single greedy attempts, graded by per-language metrics that are deliberately not pooled into one accuracy (\S\ref{sec:gen} defines each).

\subsection{Differential Prefix Validation}\label{sec:cert}
The oracle's verdicts are compared prefix by prefix against a production compiler that shares none of our code: \texttt{ocamlc} for the ML instance and \texttt{cc} for the C instance. These are finite, hand-selected test cohorts, so the results validate rather than certify. The harness extends mechanically.

Safe pruning (Theorem~\ref{lem:safe}) applies to every instance. The cohorts below add finite external evidence for the ML and C implementations, the two instances with production-compiler comparators. Differential testing can expose implementation bugs, but finite cohorts are not an exhaustive correctness proof. Theorem~\ref{lem:safe} is in any case a paper proof of the rejection discipline, and the shipped engine is an optimized realization of it: these cohorts check selected end-to-end compatibility properties rather than establishing that the optimized engine computes the theorem's verdicts at every token boundary. A slow reference evaluator built directly from the formal rules, enabling prefix-by-prefix correspondence testing against the shipped engine, is future work.

\emph{False prunes.} Over the 65 in-fragment valid programs of the cohort (40 ML, 25 C), the oracle was evaluated at \emph{every} prefix and never reported \Lost: $0/40$ on ML and $0/25$ on C, with Wilson 95\% upper bounds of 8.8\% and 13.3\% (5.6\% pooled). The bound is honest about cohort size, and extension is mechanical: add programs, rerun, diff every prefix.

\emph{The recursion probe.} 42 programs probing \texttt{let rec} corners agree with \texttt{ocamlc} in 42/42 cases (Wilson 95\% CI $[0.92, 1.00]$). The probe exists because recursion is where our fragment boundary was wrong once (\S\ref{sec:instances}): the harness that found the unsoundness now pins the repaired boundary.

\emph{Prune lead time.} On 30 invalid programs, the semantic oracle reports \Lost{} mid-stream in 25 cases (12/17 on ML, 13/13 on C), where a syntax-only oracle catches 0 before end of input. In the caught cases the \Lost{} position trails the start of the compiler's diagnostic span by 2.3 characters pooled (2.5 on ML, 2.1 on C).

\emph{Oracle cost.} Median whole-input \tryfeed{} time on four synthetic stress families grows smoothly and mildly superlinearly with expression size, at log--log slopes of roughly 1.1--1.4 (Figure~\ref{fig:runtime}). The largest inputs, 450-token C declaration sequences and 65-application STLC chains, the adversarial case for chart size, stay under ${\sim}130$~ms.

  The measurement is model-free and CPU-only, on a six-core laptop CPU.

\begin{figure}[t]
	\centering
\begin{tikzpicture}
\pgfplotstableread[col sep=comma]{
tokens,median_ms
1,0.3710
3,0.6510
5,0.9820
7,1.2690
9,1.5980
11,1.9880
13,2.3180
15,2.6290
17,2.9780
19,3.5150
21,3.8450
23,4.2380
25,4.6150
27,5.0420
29,5.4480
31,5.8340
33,6.5890
35,7.0180
37,7.5560
39,7.9000
41,8.4420
43,8.8440
45,9.5670
47,10.0510
49,10.4450
51,11.0200
53,11.6530
55,12.5490
57,13.2970
59,14.0050
61,14.4350
63,14.9620
65,15.5980
67,16.5550
69,16.7930
71,17.8090
73,18.7760
75,18.8270
77,19.7630
79,21.0920
81,21.6820
83,22.3340
85,22.7030
87,24.9000
89,25.3580
91,25.5780
93,26.1060
95,27.0860
97,27.5440
99,28.6590
101,29.2690
103,29.9040
105,30.9290
107,31.7690
109,32.4840
111,34.5750
113,34.6790
115,35.8790
117,36.9560
119,37.4240
121,38.9450
123,39.2700
125,40.1240
127,41.8910
129,42.7380
131,43.1930
}\rtdataA
\pgfplotstableread[col sep=comma]{
tokens,median_ms
3,0.6920
5,1.2640
7,1.1010
9,1.2740
11,1.4730
13,1.7280
15,1.9330
17,2.1510
19,2.3570
21,2.5830
23,2.8060
25,3.1870
27,3.4460
29,3.6530
31,3.8940
33,4.2190
35,4.4890
37,4.7110
39,4.9750
41,5.2940
43,5.5270
45,5.8150
47,6.1580
49,6.6350
51,6.9450
53,7.3710
55,7.8500
57,8.0300
59,8.4180
61,9.1090
63,9.2910
65,9.7790
67,9.9770
69,10.4940
71,10.8730
73,11.5710
75,11.7290
77,12.0470
79,12.5810
81,12.9490
83,13.8920
85,14.4130
87,14.4420
89,15.0870
91,15.5170
93,16.1070
95,16.5410
97,17.2980
99,17.9950
101,19.0140
103,19.7670
105,20.4200
107,20.5600
109,19.9520
111,20.7710
113,21.5740
115,22.5940
117,22.9630
119,23.9070
121,24.8610
123,25.1070
125,26.3730
127,29.7030
129,29.6900
}\rtdataB
\pgfplotstableread[col sep=comma]{
tokens,median_ms
9,1.7180
16,2.8930
23,4.1710
30,5.2850
37,6.4680
44,7.7710
51,9.0470
58,10.1710
65,10.8920
72,11.7800
79,12.8800
86,14.4940
93,15.6190
100,17.6570
107,18.5480
114,19.8420
121,21.4160
128,22.5560
135,23.5880
142,26.2460
149,27.6660
156,29.3320
163,31.7690
170,33.0900
177,34.5590
184,36.1490
191,37.8530
198,39.5010
205,41.2930
212,42.9840
219,44.4880
226,46.0760
233,47.9590
240,49.9970
247,51.5640
254,55.8880
261,58.7660
268,58.6310
275,60.2880
282,61.0970
289,64.6100
296,67.3680
303,69.2440
310,70.0220
317,70.6230
324,73.8800
331,72.7410
338,74.8970
345,77.5070
352,79.3170
359,81.8130
366,86.2860
373,90.2680
380,88.0610
387,91.4530
394,95.5830
401,99.6680
408,102.2140
415,104.4880
422,107.7170
429,109.9130
436,109.6940
443,122.8990
450,125.1980
}\rtdataC
\pgfplotstableread[col sep=comma]{
tokens,median_ms
2,1.0090
3,1.3780
4,1.7560
5,2.2360
6,2.6580
7,3.0940
8,3.5880
9,4.1790
10,4.7530
11,5.4880
12,6.0900
13,6.9190
14,7.7970
15,8.6600
16,9.4390
17,10.3830
18,11.1810
19,12.3510
20,13.5430
21,14.1910
22,15.5130
23,16.5810
24,17.6480
25,20.3470
26,20.5820
27,22.0710
28,23.5530
29,25.1370
30,27.3350
31,28.2910
32,30.0420
33,31.8640
34,33.7250
35,35.8170
36,37.7150
37,40.3400
38,42.2820
39,44.5720
40,51.5210
41,53.3100
42,48.4960
43,49.8270
44,52.7830
45,55.0530
46,59.2240
47,62.4610
48,66.6940
49,69.1700
50,72.3860
51,75.5670
52,78.7820
53,82.6050
54,86.4750
55,89.0750
56,89.5640
57,92.8000
58,97.1970
59,106.6430
60,111.7420
61,109.3130
62,114.4850
63,118.6170
64,125.4900
65,131.3310
}\rtdataD
\begin{axis}[
  paperaxis,
  width=\columnwidth, height=4.6cm,
  xmode=log, ymode=log,
  xmin=1, xmax=520,
  ymin=.3, ymax=180,
  xlabel={Expression size (tokens)},
  ylabel={Median validation time (ms)},
  legend pos=north west,
  legend cell align=left,
  legend style={draw=none, fill=none, font=\scriptsize},
  log basis x=10,
  log basis y=10,
]

\addplot+[color=blue, very thick, mark=*, mark options={scale=.58, fill=white, line width=.45pt}, mark repeat=8]
  table[x=tokens, y=median_ms] {\rtdataA};
\addlegendentry{Arithmetic flat sum}

\addplot+[color=green, very thick, mark=square*, mark options={scale=.58, fill=white, line width=.45pt}, mark repeat=8]
  table[x=tokens, y=median_ms] {\rtdataB};
\addlegendentry{Arithmetic nested}

\addplot+[color=purple, very thick, mark=triangle*, mark options={scale=.58, fill=white, line width=.45pt}, mark repeat=8]
  table[x=tokens, y=median_ms] {\rtdataC};
\addlegendentry{C declarations}

\addplot+[color=orange, very thick, mark=diamond*, mark options={scale=.58, fill=white, line width=.45pt}, mark repeat=8]
  table[x=tokens, y=median_ms] {\rtdataD};
\addlegendentry{STLC application chain}
\end{axis}
\end{tikzpicture}
	\caption{Median validation time vs.\ expression size on four stress families (log--log). Growth is smooth and mildly superlinear (log--log slopes ${\approx}1.1$--$1.4$).}
	\Description{Log-log line chart of median validation time in milliseconds against expression size in tokens for four grammar families. All four curves are smooth and close to straight lines with slopes slightly above one. C declaration sequences reach 125 ms at 450 tokens and STLC application chains 131 ms at 65 tokens.}
	\label{fig:runtime}
\end{figure}

\subsection{Constrained Generation}\label{sec:gen}

\paragraph{Setup.}

We run twelve open models with full logit access: Qwen3.5 at 2B, 4B, and 9B in instruct and base variants, Qwen3.5-0.8B (instruct only), Qwen3.6-27B, Phi-4-mini, gpt-oss-20b, gemma-4-26b, and Pythia-1.4B~\cite{biderman2023pythia} as a no-instruction floor. The single-shot corpus is 52 tasks (33 STLC, 14 core ML, 5 C). The agentic corpus is tool episodes in three dialects. Each attempt runs in one of five decoding modes:

\begin{itemize}
	\item \textbf{Unconstrained}: free decoding, no mask.
	\item \textbf{Constrained Direct}: the semantic mask on every token.
	\item \textbf{Constrained Mixed}: free reasoning, then a constrained emission block, the interleaved schedule of CRANE~\cite{banerjee2025crane}.
	\item \textbf{Unconstrained Thinking}: a budgeted chain-of-thought phase, then a free answer~\cite{wei2022cot}.
	\item \textbf{Syntactic Only}: the same mask with the typing rules stripped, our within-framework ablation.
\end{itemize}

Three models did not run \textbf{Syntactic Only} (Qwen3.6-27B, gemma-4-26b, gpt-oss-20b), so pooled comparisons are matched over the models that ran both modes being compared. Decoding is greedy at temperature 0, one attempt, fixed seed. Intervals are Wilson 95\% and measure cohort uncertainty on hand-selected tasks, not population sampling.

Grading is per-language and deliberately heterogeneous, and we do not pool it into one accuracy:
\begin{itemize}
	\item STLC: \aufbau supplies the decoding oracle and an independent handwritten validator checks the output type and $\beta$/$\alpha$-equivalence against the reference.
	\item tool: the executed return value against the registry's real implementations. Behavioral.
	\item ML: an output counts as valid only when \aufbau accepts the declared type \emph{and} \texttt{ocamlc} independently accepts the program, and this program has the required type.
	\item C: an output counts as valid only when \aufbau and \texttt{cc -fsyntax-only} independently accept it.
\end{itemize}

The divergence idiom that gives ML its coverage also inhabits any demanded type at grading time: of 259 passing ML records, 36 contain \texttt{assert false} (9 of the 52 passing direct-mask records). \texttt{ocamlc}'s independent check on the declared type is what keeps the ML criterion independent of the oracle.

  ML and C passes are therefore compiler-validated \emph{validity} metrics rather than behavioral task-correctness metrics, and we reserve behavioral claims for STLC and tool. Each task is specified as a TOML file (prompt, initial prefix, reference program, resolution mode).

\paragraph{Per-effect decomposition.}

Gains are heterogeneous across model classes.

 Across the nine models that ran both ablation arms, every observed semantic-minus-syntactic model--language point estimate was nonnegative.  Constraint can repair weak and base models while taxing fluent ones: Qwen3.5-2B-Base rises from 0\% unconstrained to 50\% in mixed mode, while Qwen3.5-9B and Qwen3.6-27B both lose ground under the same mask.

Mixed mode avoids the cost by confining the mask to a late emission block, tracking full chain-of-thought within a few points while emitting 28--36 tokens instead of 1{,}100--2{,}000.

\paragraph{The semantic layer over syntax.}

The ablation isolates what the typing rules add over a pure CFG mask, holding parser, sampler, tokenizer bridge, and grammar fixed. This is the generation comparison the paper is about. The largest lifts by language are $+15.2$ points on STLC task correctness (Phi-4-mini) and $+14.3$ on ML validity (Qwen3.5-4B, Phi-4-mini, Qwen3.5-9B-Base), while every matched C cell is exactly 0.

 Aggregated within each language over the matched models, the semantic layer adds $+7.1$ points on ML (32.5\% against 25.4\%), $+2.0$ on STLC (40.4\% against 38.4\%), and exactly 0 on C (next paragraph). The ML figure carries the caveat that ML's validity grader shares the oracle's type system.

Syntax-only constraining can itself \emph{hurt}: pooled over models it sits below unconstrained on STLC ($38.4\%$ against $44.4\%$) and ML ($25.4\%$ against $30.2\%$), fighting the model's prior with no semantic information to spend. That replicates, inside our own stack, the degradation CRANE~\cite{banerjee2025crane} reports for constrain-from-the-first-token and the underperformance of grammar-only decoding in \textsc{ChopChop}'s tables~\cite{nagy2026chopchop}.

\begin{figure}[t]
	\centering
\begin{tikzpicture}
	\begin{axis}[
			paperaxis,
			width=\columnwidth, height=4.2cm,
			ybar, bar width=4pt,
			ymin=0, ymax=18,
			ylabel={semantic $-$ syntactic (pp)},
			symbolic x coords={Pythia,Qw-0.8B,Qw-2B,Qw-4B,Qw-9B,Phi-4,Qw-2B-B,Qw-4B-B,Qw-9B-B},
			xtick=data,
			xticklabel style={rotate=35, anchor=east, font=\tiny},
			legend columns=2,
			legend style={draw=none, fill=none, font=\scriptsize, at={(0.5,-0.32)}, anchor=north},
		]
		\addplot+[draw=none, fill=blue!55] coordinates
			{(Pythia,0) (Qw-0.8B,0) (Qw-2B,3.0) (Qw-4B,0) (Qw-9B,0) (Phi-4,15.2) (Qw-2B-B,0) (Qw-4B-B,0) (Qw-9B-B,0)};
		\addlegendentry{STLC}
		\addplot+[draw=none, fill=green!60] coordinates
			{(Pythia,7.1) (Qw-0.8B,0) (Qw-2B,0) (Qw-4B,14.3) (Qw-9B,7.1) (Phi-4,14.3) (Qw-2B-B,0) (Qw-4B-B,7.1) (Qw-9B-B,14.3)};
		\addlegendentry{ML}
	\end{axis}
\end{tikzpicture}
	\caption{Per-model, per-language semantic-minus-syntactic difference in percentage points. STLC and ML are shown separately. C is omitted because every matched C difference is zero.}
	\Description{Grouped bars show semantic-minus-syntactic pass-rate changes for nine models. All bars are nonnegative. The largest STLC increase is 15.2 points and the largest ML increases are 14.3 points. C is omitted because every matched change is zero.}
	\label{fig:heatmap}
\end{figure}

\paragraph{Where constraint helps, and a negative result.}

On the C-like fragment the direct mask lifts pooled validity from 53\% to 80\% ($+27$ points, matched over models that ran all five modes), but the syntax-only mask reaches the same 80\%. The ablation attributes none of the C gain to the typing rules, and the difference is not significant. Format control alone explains the lift, and we report that as a negative result rather than crediting definition-before-use or pointer typing. These C cells were produced before the return-type check of \S\ref{sec:instances} was added to \texttt{c.auf}, so they describe the earlier grammar. The added check tightens the semantic arm only, and rerunning the arm is future work. The semantic layer's language-level gain concentrates on ML instead ($+7.1$ pooled), where inference-style typing constrains many surface choices. On STLC the direct mask is slightly negative pooled (44\% to 40\%) while mixed mode reaches 55\%: those tasks are search-heavy rather than surface-heavy, so reasoning, not masking, binds.

 The tool dialects are the agentic case of \S\ref{sec:entropy}.

\begin{figure}[t]
	\centering
	\begin{tikzpicture}
	\pgfplotstableread[col sep=comma]{
		x,mode,SatisfiedPassed,LostAtStop,LiveAtStop,SatisfiedFailed
		1,plain,39.10,52.18,2.26,6.46
		2,direct,40.06,0.00,43.13,16.80
		3,thinking,55.02,35.60,3.40,5.99
		4,mixed,55.74,0.00,34.41,9.85
	}\errdata

	\begin{axis}[
			paperaxis,
			width=8cm, height=5.6cm,
			ybar stacked,
			bar width=18pt,
			ymin=0, ymax=100,
			xmin=0.45, xmax=4.55,
			ylabel={Share of tasks (\%)},
			xtick={1,2,3,4},
			xticklabels={Uncons.,Cons.,Un. Think,Cons. Mix.},
			x tick label style={align=center, font=\scriptsize},
			title={Observable final outcomes across decoding modes},
			legend columns=2,
			legend style={
					at={(.5,-.18)},
					anchor=north,
					font=\scriptsize,
					/tikz/every even column/.append style={column sep=3pt}
				},
			area legend,
		]
		\addplot+[draw=white, fill=green!62]
		table[x=x,y=SatisfiedPassed] {\errdata};
		\addlegendentry{satisfied/pass}

		\addplot+[draw=white, fill=red!52]
		table[x=x,y=LostAtStop] {\errdata};
		\addlegendentry{lost at stop}

		\addplot+[draw=white, fill=orange!48]
		table[x=x,y=LiveAtStop] {\errdata};
		\addlegendentry{live at stop}

		\addplot+[draw=white, fill=purple!50]
		table[x=x,y=SatisfiedFailed] {\errdata};
		\addlegendentry{satisfied/fail}

	\end{axis}
\end{tikzpicture}
	\caption{Observable final outcomes across 619 tasks and four decoding modes. \textbf{Lost at stop}: final verdict \Lost. \textbf{Live at stop}: final verdict \Live{} without a closed derivation. \textbf{Satisfied/pass} and \textbf{/fail} split closed outputs by the grader of \S\ref{sec:gen}. The labels are observable final states, with no inference about suffix existence, and the constrained arms carry no lost-at-stop mass and more live-at-stop mass.}
	\Description{Four stacked bars show final outcomes for unconstrained, constrained direct, unconstrained thinking, and constrained mixed decoding. The two constrained modes have no Lost-at-stop mass and more Live-at-stop mass. Satisfied outputs are split into passing and failing outcomes.}
	\label{fig:errors}
\end{figure}

\subsection{Exploratory Observations}\label{sec:entropy}\label{sec:agents}

Figure~\ref{fig:entropy} shows the two rejection-telemetry observations.

\begin{figure}[t]
	\centering
\begin{tikzpicture}
	\begin{groupplot}[
			group style={group size=1 by 2, vertical sep=18mm},
			width=0.85\columnwidth,
			every axis/.append style={
					axis line style={draw=grid}, tick style={draw=grid},
					tick label style={font=\scriptsize, text=ink},
					label style={font=\footnotesize, text=ink},
					title style={font=\footnotesize, text=ink, yshift=-1mm},
				},
		]
		\nextgroupplot[
			height=4.2cm,
			title={\textbf{(a)} model uncertainty: entropy where the mask fires},
			ybar, bar width=7pt, bar shift auto,
			ymin=0, ymax=6, ytick={0,1,2,3,4,5,6},
			ylabel={pre-mask entropy (bits)},
			symbolic x coords={0.8B,2B,4B,9B,27B,gpt-oss},
			xtick=data, x tick label style={rotate=0},
			axis lines*=left, ymajorgrids, grid style={grid!55},
			legend style={draw=none, fill=none, font=\scriptsize, at={(0.02,0.98)}, anchor=north west},
			legend cell align=left,
		]
		\addplot+[draw=none, fill=green] coordinates
			{(0.8B,0.866)(2B,0.944)(4B,0.531)(9B,0.372)(27B,0.218)(gpt-oss,2.327)};
		\addlegendentry{token accepted}
		\addplot+[draw=none, fill=red] coordinates
			{(0.8B,1.705)(2B,1.673)(4B,1.056)(9B,1.284)(27B,0.812)(gpt-oss,5.506)};
		\addlegendentry{token rejected (retried)}

		\nextgroupplot[
			height=3.9cm,
			title={\textbf{(b)} grammar branching: tokens the oracle admits / position},
			xbar stacked, bar width=8.5pt,
			xmin=0, xmax=100, xtick={0,25,50,75,100},
			xlabel={share of positions (\%)},
			ytick={0,1,2,3}, yticklabels={{ML syntax}, {ML semantic}, {C syntax}, {C semantic}},
			y=5.2mm, enlarge y limits={abs=4.5mm},
			axis lines*=left,
			every axis plot/.append style={draw=white, line width=0.7pt},
			legend style={draw=none, fill=none, font=\scriptsize, legend columns=4,
					at={(0.5,-0.30)}, anchor=north, /tikz/every even column/.append style={column sep=3pt}},
			legend cell align=left,
		]
		\definecolor{tf}{HTML}{184F95}\definecolor{tt}{HTML}{2A78D6}\definecolor{to}{HTML}{86B6EF}\definecolor{tw}{HTML}{CDE2FB}
		\addplot+[fill=tf] coordinates {(9.09,0)(13.40,1)(6.02,2)(12.05,3)};
		\addlegendentry{forced (1)}
		\addplot+[fill=tt] coordinates {(2.39,0)(10.53,1)(13.25,2)(25.30,3)};
		\addlegendentry{tight (2--4)}
		\addplot+[fill=to] coordinates {(11.96,0)(13.88,1)(74.70,2)(62.65,3)};
		\addlegendentry{open (5--16)}
		\addplot+[fill=tw, nodes near coords={}] coordinates {(76.56,0)(62.20,1)(6.02,2)(0.00,3)};
		\addlegendentry{wide (17+)}
	\end{groupplot}
\end{tikzpicture}
	\caption{\textbf{(a)} Mean pre-mask entropy at steps where the sampler's first choice was accepted versus rejected (retried), per model, direct-mask arm: rejection is associated with the model's uncertain steps. \textbf{(b)} Share of positions at which the oracle admits one candidate token (forced), a few (tight), or many, semantic oracle against syntactic-only (14 ML / 5 C programs). }
	\Description{Two panels. Panel a is a grouped bar chart of pre-mask entropy in bits for six models. For each model the rejected-step bar is roughly twice the accepted-step bar or more. Panel b is a stacked horizontal bar chart with four rows, ML and C each under syntactic and semantic oracles, showing the distribution of admissible-set sizes binned as forced, tight, open, and wide. We see that semantic rows shift mass toward forced and tight bins.}
	\label{fig:entropy}
\end{figure}

\paragraph{Rejection is associated with model uncertainty.}

Where the mask rejected the model's first choice, pre-mask entropy is markedly higher than at accepted steps, at every scale of the Qwen instruct ladder: 0.87 against 1.71 bits at 0.8B, 0.94 against 1.67 at 2B, 0.53 against 1.06 at 4B, 0.37 against 1.28 at 9B, 0.22 against 0.81 at 27B, and 2.33 against 5.51 on gpt-oss-20b (Figure~\ref{fig:entropy}a). This is an association in single greedy runs, not a predictive evaluation, but it matches CRANE's account of when constraint hurts~\cite{banerjee2025crane}: rejections concentrate at uncertain steps and are rare at confident ones. Retry pressure broadly falls with capability on that ladder (6.1\% of steps at 0.8B, 2.1\% at 27B, 13.9\% on 2B-Base) but not across families, gpt-oss-20b retries 32.9\%, while Pythia, the weakest model, retries only 6.8\%, so this is not a model ranking.

\paragraph{How strongly does the language constrain decoding?}
The rejection signal has a model-free counterpart: \emph{at each position of a valid program, how many candidate tokens does the oracle admit?}

We measure the mean $\log_2$ admissible-set size over a corpus-derived token alphabet: a Hartley-style branching measure under a uniform assumption, not Shannon entropy, conditioned on the evaluated corpus (14 ML programs, 209 positions, 42-token alphabet, and 5 C programs, 83 positions). Over the ML corpus the semantic oracle admits 16.8 candidate tokens per position on average against 26.3 for its syntactic twin, removing 9.5 candidates per position and raising the share of \emph{forced} positions, where exactly one candidate is legal, from 9\% to 13\% (Figure~\ref{fig:entropy}b). On C it admits 7.1 against 11.4. The typing layer thus closes roughly a third of the syntactic branching without ever rejecting the next gold token in the valid-corpus walk, and the measurement is model-free.

\begin{figure}[t]
	\centering
\begin{tikzpicture}
	\begin{groupplot}[
			group style={
					group size=2 by 1,
					horizontal sep=0.55cm,
				},
			width=0.36\columnwidth,
			height=4.3cm,
			scale only axis,
			ymin=0.5, ymax=7.5,
			ytick={7,6,5,4,3,2,1},
			axis lines*=left,
			axis line style={draw=grid},
			tick style={draw=grid},
			tick label style={font=\scriptsize, text=ink},
			label style={font=\footnotesize, text=ink},
			title style={font=\footnotesize\bfseries, text=ink, yshift=-1mm},
			xmajorgrids, grid style={grid!45},
			clip=false,
		]

		\nextgroupplot[
			xmode=log, log ticks with fixed point,
			xmin=20, xmax=3000,
			xtick={30,100,300,1000,3000},
			xlabel={mean generated tokens},
			title={Generation cost},
			yticklabels={Q3.5 0.8B, Q3.5 2B, Q3.5 4B, Q3.5 9B, Q3.6 27B, Phi-4-mini, gpt-oss-20b},
			yticklabel style={font=\scriptsize, text=ink, align=right},
			legend style={draw=none, fill=none, font=\scriptsize, legend columns=2, column sep=5pt, at={(1.12,1.18)}, anchor=south},
		]

		\addplot[draw=ink!38, line width=0.75pt, forget plot] coordinates {(2064.81,7) (195.13,7)};
		\addplot[draw=ink!38, line width=0.75pt, forget plot] coordinates {(1822.92,6) (85.69,6)};
		\addplot[draw=ink!38, line width=0.75pt, forget plot] coordinates {(2026.35,5) (33.90,5)};
		\addplot[draw=ink!38, line width=0.75pt, forget plot] coordinates {(1133.02,4) (35.69,4)};
		\addplot[draw=ink!38, line width=0.75pt, forget plot] coordinates {(1346.31,3) (28.04,3)};
		\addplot[draw=ink!38, line width=0.75pt, forget plot] coordinates {(538.52,2) (733.25,2)};
		\addplot[draw=ink!38, line width=0.75pt, forget plot] coordinates {(307.67,1) (269.06,1)};

		\addplot+[only marks, mark=o, mark size=2.2pt, color=purple, line width=0.9pt,
			mark options={draw=purple, fill=white, line width=0.9pt}]
		coordinates {(2064.81,7) (1822.92,6) (2026.35,5) (1133.02,4) (1346.31,3) (538.52,2) (307.67,1)};
		\addlegendentry{thinking}

		\addplot+[only marks, mark=*, mark size=2.2pt, color=orange]
		coordinates {(195.13,7) (85.69,6) (33.90,5) (35.69,4) (28.04,3) (733.25,2) (269.06,1)};
		\addlegendentry{mixed}

		\draw[grid!75, line width=0.5pt] (axis cs:20,2.5) -- (axis cs:3000,2.5);

		\nextgroupplot[
			xmin=10, xmax=100,
			xtick={25,50,75,100},
			xlabel={pass rate (\%)},
			title={Task performance},
			yticklabels={}, ytick style={draw=none},
		]

		\addplot[draw=ink!38, line width=0.75pt, forget plot] coordinates {(19.23,7) (25.00,7)};
		\addplot[draw=ink!38, line width=0.75pt, forget plot] coordinates {(71.15,6) (42.31,6)};
		\addplot[draw=ink!38, line width=0.75pt, forget plot] coordinates {(84.62,5) (80.77,5)};
		\addplot[draw=ink!38, line width=0.75pt, forget plot] coordinates {(88.46,4) (86.54,4)};
		\addplot[draw=ink!38, line width=0.75pt, forget plot] coordinates {(90.38,3) (86.54,3)};
		\addplot[draw=ink!38, line width=0.75pt, forget plot] coordinates {(19.23,2) (21.15,2)};
		\addplot[draw=ink!38, line width=0.75pt, forget plot] coordinates {(69.23,1) (82.69,1)};

		\addplot+[only marks, mark=o, mark size=2.2pt, color=purple, line width=0.9pt,
			mark options={draw=purple, fill=white, line width=0.9pt}]
		coordinates {(19.23,7) (71.15,6) (84.62,5) (88.46,4) (90.38,3) (19.23,2) (69.23,1)};

		\addplot+[only marks, mark=*, mark size=2.2pt, color=orange]
		coordinates {(25.00,7) (42.31,6) (80.77,5) (86.54,4) (86.54,3) (21.15,2) (82.69,1)};

		\draw[grid!75, line width=0.5pt] (axis cs:10,2.5) -- (axis cs:100,2.5);

	\end{groupplot}
\end{tikzpicture}
	\caption{Thinking vs.\ mixed mode: mean generated tokens on a log scale (left) and descriptive task-mix pass rate (right). Each row is one model and the line connects thinking (open purple) to mixed (filled orange). Mixed greatly reduces token count for the five Qwen rows (roughly $11\times$ to $60\times$), but not for Phi-4-mini ($0.73\times$, i.e.\ mixed emits more tokens) or gpt-oss-20b ($1.14\times$). Pass-rate changes are heterogeneous across rows.}
	\Description{Two aligned dot plots compare thinking and mixed decoding for seven models. Mixed uses far fewer tokens for five Qwen models, more for Phi-4-mini, and slightly fewer for gpt-oss-20b. Pass-rate changes vary by model.}
	\label{fig:efficiency}
\end{figure}

\paragraph{Agent episodes.}

Each turn decodes one registry step under the types established by prior turns, and episodes are graded by returned value. The comparison is unpaired (per-cell counts of 8 to 51), so we report direction only. Constrained Qwen3.5-4B-Base completes 57\% of episodes (8/14) against 22\% (4/18) unconstrained, while the fluent 4B is level (69\% against 71\%). Rates vary with surface form at least as much as with constraint: the fluent 4B reaches 88\% on S-expressions against 50\% on XML under the same typing rules. Several off-ladder models complete no episodes in either arm, and a paired design with matched coverage is future work.

\section{Related Work}\label{sec:related}

\paragraph{Syntactic constrained decoding.}

Outlines compiles regular constraints to automata over the token vocabulary~\cite{willard2023efficient}. PICARD prunes beam candidates by incremental parsing~\cite{scholak2021picard}, SynCode~\cite{ugare2025syncode}, Formatron~\cite{sun2025formatron}, and DOMINO~\cite{beurerkellner2024guiding} optimize grammar-aligned masking to near-zero overhead, and subword alignment has a clean automaton treatment~\cite{park2025flexible}. All decide membership in a fixed context-free or regular surface language. None carries the context-dependent state, such as scope, types, and declaration effects, on which models actually fail. Our syntactic layer uses the same Earley machinery. The contribution is the semantic layer above it and the contract analysis that survives the lift.

\paragraph{Semantic constraints.}

\textsc{ChopChop}~\cite{nagy2026chopchop} develops the general theory of semantic pruning for LLM decoding, casting admissibility as a realizability problem and stating soundness and completeness (their \S3.4) \emph{relative} to properties of the pruner the user programs: soundness needs an under-approximate or consistent checker, completeness an over-approximate and consistent one. For type safety they show no checker can be both, so their type pruner is over-approximate and consistent only to type depth three. We read their framework as the general setting our work instantiates.

We use a declarative class with a context-free surface and decidable first-order semantic premises. Safe pruning follows from stable contradictions. Dead-end freedom is a separate grammar property for which we give sufficient productivity, type-coverage, and left-to-right-flow conditions (Theorem~\ref{prop:complete}). The declarative route also gives a direct validation path: instances that check real languages can be tested prefix by prefix against production compilers (\S\ref{sec:cert}).

M\"undler et al.~\cite{muendler2025type} demonstrate type-constrained decoding for TypeScript with an incremental checker engineered for that language and evaluated at scale, a stronger per-language result than ours on its own territory. Our ML instance is a smaller monomorphic fragment, but comes from a declarative definition, and the same definition language yields the C-like fragment and the tool DSL with no new pruner code. IterGen~\cite{ugare2025itergen} backtracks over partial generations to enforce per-symbol constraints and Synchromesh~\cite{poesia2022synchromesh} enforces per-domain checks. Neither analyzes the two prefix contracts separately.

 Tractable control via distillation into a tractable model~\cite{zhang2023tractable} enforces constraints by construction rather than by pruning, a complementary route.

\paragraph{Grammar formalisms.}

The formalism assembles classical parts: attribute grammars with synthesized and inherited attributes~\cite{knuth1968semantics}, definite-clause grammars, which showed unification against a logic program's terms can carry a grammar's semantics~\cite{pereira1980dcg}, and unification-based grammar generally~\cite{shieber1986unification}.

  Semantic grammar specifications restrict these ideas to first-order terms, syntactic unification, and right-bound effects so that prefix verdicts are decidable and the pruning contract is analyzable. Condition (F) of \S\ref{sec:soundness} is the corresponding left-to-right restriction on a prefix parser.

\paragraph{Differential testing.}

Validating a checker against an independent implementation is differential testing in McKeeman's sense~\cite{mckeeman1998differential}, and its randomized form found hundreds of bugs in production C compilers~\cite{yang2011csmith}. Our variant tests \emph{prefix} behavior: the compiler grades whole programs, the oracle grades every prefix, and the harness checks the two never disagree in the direction that matters for decoding.

\paragraph{Reasoning and constraints.}

CRANE~\cite{banerjee2025crane} establishes that constrained and unconstrained decoding are usefully interleaved, reasoning outside the constrained block, the answer inside it, and explains why constraining from the first token degrades reasoning. Our mixed mode is that architecture with a semantic rather than syntactic block, and the telemetry of \S\ref{sec:entropy} agrees with its account: rejection lands on exactly the uncertain steps. SynCode~\cite{ugare2025syncode} reports the syntactic version.

\paragraph{Editors, holes, incremental typing.}

Typing incomplete programs is an editor-services problem before it is a decoding problem. Hazel gives incomplete programs a semantics in which the unfilled position is a first-class typed hole~\cite{omar2017hazelnut}. Our $\mathsf{Pending}$ binding is that object read through a parser, though $\Live$ is weaker than Hazel's typing, saying only that no stable contradiction has appeared (\S\ref{sec:soundness}). Incremental attribute evaluation with available/unavailable attributes goes back to Demers, Reps, and Teitelbaum~\cite{demers1981incremental}. Merlin~\cite{bour2018merlin} is the engineering dual, recovering from broken text to keep services alive where we prune extensions of well-typed prefixes: recovery maximizes tolerance, pruning maximizes rejection, and both need the same incremental frontier.

\section{Discussion and Limitations}\label{sec:discussion}

\paragraph{Formal limitations.} The safe-pruning proof is a paper proof of Aufbau's rejection discipline, not a mechanized proof of the code. Differential validation exercises the shipped implementation on finite cohorts, so implementation bugs outside those cohorts remain possible.
Dead-end freedom is proved only for grammars satisfying the sufficient conditions of \S\ref{sec:soundness}. We establish those conditions directly for the finite-lambda, core ML, and C-like fragments. We do not claim them for the tool DSLs, nor for the experimental STLC instance, whose unrestricted base-type universe defeats type coverage. The implementation does not attempt to infer this classification automatically. Token-sequence reachability is proved under exact spelling and the stated vocabulary-coverage hypothesis. Bounded proposal search additionally needs exhaustive masking, a deterministic fallback, or a separate sampler-fairness assumption which is assumed. Nothing is mechanized.


\paragraph{Empirical limitations.} The language metrics are heterogeneous: ML and C measure compiler-validated validity, STLC and tool tasks behavior. Generation uses one greedy attempt per task, the validation corpora are small and hand-selected, and the token-count and tool-episode comparisons are exploratory rather than causal. The C generation cells predate the return-type check now in \texttt{c.auf} (\S\ref{sec:instances}).

\section{Conclusion}\label{sec:conclusion}
A semantic prefix oracle owes the decoder two contracts, safe pruning and dead-end freedom. Aufbau provides safe pruning by rejecting only stable semantic contradictions. Dead-end freedom is separate. We give three sufficient grammar conditions, surface productivity, type coverage, and left-to-right constraint flow, and show how they separate the covered finite-lambda, ML, and C-like fragments from the uncovered experimental STLC and tool instances. A tokenizer-lifting lemma carries these guarantees from characters to token sequences under an explicit vocabulary-coverage hypothesis. Differential validation against production compilers shows zero false prunes across every prefix of 65 compiler-valid programs, and 25/30 invalid programs caught mid-stream where a syntax-only oracle catches none. These are finite compatibility results for the shipped engine, not an exhaustive correctness proof. In generation the matched ablation attributes positive observed lift to the typing rules on ML, while every matched C difference was zero, which format control alone can explain. The next steps are behavioral ML and C grading, generation experiments on the covered finite-lambda instance, a matched mixed-mode rerun, broader differential corpora, and mechanization.

\balance
\bibliographystyle{ACM-Reference-Format}
\bibliography{references}

\end{document}